\newtheorem{theorem}{\bf Theorem}[section]
\newtheorem{lemma}[theorem]{\bf Lemma}
\newtheorem{corollary}[theorem]{\bf Corollary}
\newenvironment{proof}{\noindent{\em Proof:}}{\quad \hfill$\Box$\vspace{2ex}}
\def \and {\, \mbox{\rm and}\, }
\def \sgn {\,{\rm sgn}\,}
\def \Re {\,{\rm Re}\,}
\newcommand{\Rmnum}[1]{\expandafter\@slowromancap\romannumeral #1@}
\begin{document}
\title{\bf A Note on the High-dimensional Sparse Fourier Transform in the Continuous Setting}
\author{Liang Chen\thanks{Department of Mathematics, Jiujiang University, Jiujiang, 332000, China. E-mail address: {\it chenliang3@mail2.sysu.edu.cn}. Support in part by Natural Science Foundation of China under grants 11971490}}
\date{}
\maketitle

%\subjclass[2000]{Primary 46C15; Secondary 46E22}
%
%\date{}
%
%\dedicatory{}

\begin{abstract}
In this paper, we theoretically propose a new hashing scheme to establish the sparse Fourier transform in high-dimension space. The estimation of the algorithm complexity shows that this sparse Fourier transform can overcome the curse of dimensionality. To the best of our knowledge, this is the first polynomial-time algorithm to recover the high-dimensional continuous frequencies.

\noindent{\bf Keywords:} Curse of dimensionality, Frequency estimation, Runtime complexity, Sparse Fourier transform.

\end{abstract}

\section{Introduction}

The sparse Fourier transform (SFT) has received continuous attention from applied mathematics (\cite{2010Combinatorial,iwen2013improved,2016Sparse,2017A,bittens2019deterministic,choi2020sparse}), signal processing (\cite{hsieh2013sparse,2014Recent,2014Sparse,pawar2017ffast,2017The,wang2018multidimensional}), and theoretical computer science communities (\cite{kushilevitz1993learning,gilbert2002near,gilbert2005improved,hassanieh2012simple,kapralov2016sparse,kapralov2019dimension}) over the last two decades. Since the sample complexity and runtime complexity of the SFT are mainly affected by the sparsity, and less affected by the bandwidth, the SFT has great advantages in signal processing. Most of the relevant works deals with discrete case where the frequencies are on the grid. Under such condition, the SFT can overcome the curse of dimensionality (\cite{kammerer2017high,kapralov2019dimension,choi2019multiscale,choi2020high,choi2020sparse}). However, this condition that the frequencies are on the grid is so strong. It is natural for people to consider the case that the frequencies are in a continuous region. This has led researchers to establish the sparse Fourier transform in the one-dimensional continuous setting (\cite{boufounos2015s,price2015robust,chen2016fourier}). Recently, \cite{jin2020robust} initiates the study on the SFT in the high-dimensional continuous setting. Unfortunately, their SFT method is still subject to the curse of dimensionality, namely, its runtime complexity is greater than $2^{O(d)}$ ($d$ stands for the dimension).

In this paper, we present a new hashing scheme to transform the high-dimensional SFT into the one-dimensional SFT. The computational complexity of this algorithm is polynomial, which means that the algorithm can break the curse of dimensionality. To the best of our knowledge, this is the first polynomial-time algorithm to recover frequencies in the high-dimensional continuous setting.

Formally, we consider the signal $f$ of the following form

\begin{equation}
f(t)=\sum_{j=1}^{k}f_{j}(t)\triangleq \sum_{j=1}^{k}a_{j}\exp(2\pi i w_{j} \cdot t),
\end{equation} where $t \in \mathbb{R}^{d}, w_{j}\in [-M, M]^d$, $a_{j}\in \mathbb{C}$, $  0<A{'}\le |a_{j}|\le A$ for all $j=1,2,\dots,k$ and $\min_{1\le i<j\le k}|w_{i}-w_{j}|>\eta $.

Our goal is to recover $w_{j},a_{j},j=1,2,\dots ,k$.
\section{Main Result}

We first give some symbols and notions. Let $\sgn(x)=1,\text{when}\quad x\ge 0;$ otherwise, $\sgn(x)=-1$. Let $J^{*}$ denote the transpose of the matrix $J$. Denote for each $t \in \mathbb{R}$ by $\lfloor t\rfloor$ the largest integer not bigger than $t$.
Let $$\Gamma_{1} \triangleq \{(\frac{x_{1}}{F},\dots ,\frac{x_{d}}{F}):x_{i}\in \mathbb{Z}\cap [\frac{-TF}{2},\frac{TF}{2})\},\quad \Gamma_{2} \triangleq \{(\frac{\xi_{1}}{T},\dots ,\frac{\xi_{d}}{T}):\xi_{i}\in \mathbb{Z}\cap [\frac{-TF}{2},\frac{TF}{2})\}.$$
The discrete Fourier transform of the function $g$ takes the following form
$$\mathcal{F}[g](\xi)= \frac{1}{(\sqrt{T}F)^d}\sum_{x\in \Gamma_{1} }g(x)\exp(-2\pi i x \cdot \xi),\quad \xi\in\Gamma_{2},$$
and the inverse discrete Fourier transform of the function $g$ is defined by
$$\mathcal{F}^{-1}[g](x)= \frac{1}{(\sqrt{T})^d}\sum_{\xi\in \Gamma_{2} }g(\xi)\exp(2\pi i x \cdot \xi),\quad  x\in\Gamma_{1}.$$
For $x,y\in \Gamma_{1}$, $x\pm y\triangleq x\pm y (mod T)\in \Gamma_{1}$. For $x,y\in \Gamma_{2}$, $x\pm y\triangleq x\pm y (mod F)\in \Gamma_{2}$.

The ``bucket" is defined by
 \begin{equation}
 B_{j}\triangleq \{(\xi_{1}/T, \dots ,\xi_{d}/T)\in \Gamma_{2}  : \xi_{d}\in [ \frac{TF(j-1)}{s}-\frac{FT}{2}, \frac{TFj}{s}-\frac{FT}{2})\}, \quad 1\le j\le s.
\end{equation}

The hashing transform is defined by
 $
H(\xi)\triangleq
h(\xi)+(0,\dots ,0,b/T)^{*}
, $  where $$h(\xi)=
\begin{bmatrix}
\mathbf{I}_{d-1} & 0   \\
V_{h} & h_{d}\\
\end{bmatrix}\xi,$$ here $\mathbf{I}_{d-1}$ is the identity matrix of order $d-1$, $V_{h}=(h_{1},\dots,h_{d-1})$ and $\{h_{1},\dots,h_{d}\}$ are independent draws from  the uniform distribution on the set $\{2n+1:n\in \mathbb{Z}\}\cap[0,F/\eta],$ where $T>1/\eta$. The random variable $b$ obeys the uniform distribution on the set $\mathbb{Z} \cap \{[0,\frac{TF}{s})\}$.

In this paper, we assume that $T,F,s$ are powers of $2$ and $1<s<F$. Next, we give a key lemma.
\begin{lemma}\label{22} Suppose $0<\delta<\frac{1}{2}$, $0<\epsilon<\min \{1,\eta,\frac{A^{'}}{4},\frac{1}{4A^2}\}$, $s=\mathcal{O}(\frac{\sqrt{d} k^{2}}{\delta} )$, $T=\mathcal{O}(k^{4}d^{5/2}/((\frac{\epsilon \delta}{ds})^{2}\eta\delta^{2}))$, $F=\mathcal{O}(k^{2}M/\delta)$. With probability at least $1-\delta/4$ over the randomness of $\{h_{1},\dots,h_{d},b\}$, for each $j\in\{1,\dots,s\}$,  %fix any $d-1$ components $\{x_{1}, ,\dots ,x_{l-1},x_{l+1},\dots , x_{d}\}\in \mathbb{Z}\cap [0,TF)$%
either \begin{equation}
 I_{j,1}\triangleq \frac{1}{(TF)^d}\sum_{x \in \Gamma_{1}}
 \big|e^{\frac{2\pi i bx_{d}}{TF}}\mathcal{F}^{-1}[\mathcal{X}_{j}\cdot \mathcal{F}[f_{H}]]((h^{-1})^{*} x ) \big|^2  \le (\frac{A\epsilon \delta}{3ds})^2.
 \end{equation}
or there exists a unique $j_{k}\in\{1,2,\dots,k\}$ such that
 \begin{equation}
I_{j,2}\triangleq \frac{1}{(TF)^d}\sum_{x \in \Gamma_{1}}
 \big|f_{j_{k}}(x)-e^{\frac{2\pi i bx_{d}}{TF}}\mathcal{F}^{-1}[\mathcal{X}_{j}\cdot \mathcal{F}[f_{H}]]((h^{-1})^{*} x) \big|^2  \le (\frac{A\epsilon \delta}{3ds})^2.
 \end{equation}
 Where $$f_{H}(x)\triangleq f((h^{*}x)\exp(-2\pi i (h^{*}x\cdot (0,\dots,0,b/T)^{*} ),$$ and
 $$\mathcal{X}_{j}(\xi)\triangleq\left\{
\begin{array}{rcl}
1       &      & {\xi  \in  B_j}\\
0     &      & {\xi  \notin  B_j}
\end{array} \right.  , \quad j=1,2,\dots,s.$$
\end{lemma}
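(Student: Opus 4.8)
The plan is to analyze, for a fixed bucket index $j$, the "bucket signal" obtained by hashing $f$, windowing onto $B_j$ in frequency, and inverting, and to show that with high probability the hashing isolates at most one frequency $w_{j_k}$ into $B_j$ so that the recovered bucket signal is either negligible (case $I_{j,1}$) or essentially a pure exponential $f_{j_k}$ (case $I_{j,2}$). First I would unwind the definitions: observe that $f_H(x) = f(h^*x)\,e^{-2\pi i (h^*x)\cdot(0,\dots,0,b/T)^*}$ is, term by term, $\sum_{j=1}^k a_j \exp(2\pi i (h^* x)\cdot w_j)\,e^{-2\pi i b x_d/T}$, so that after applying $h^*$ to the frequency variable, the $j$-th term of $f_H$ has "frequency" $h w_j$ shifted in the last coordinate by $-b/T$; i.e. the hashing map $H$ acts on the true frequencies $w_j$ and sends $w_j \mapsto H(w_j) = h(w_j) + (0,\dots,0,b/T)^*$. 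The point of the block structure of $h$ — identity on the first $d-1$ coordinates, and $(V_h, h_d)$ on the last — is that only the last coordinate of $H(w_j)$ is "scrambled", and the windowing $\mathcal X_j$ selects frequencies whose last coordinate lies in the $j$-th sub-interval of length $TF/s$. So the combinatorial heart is: after hashing, the last coordinates $H(w_j)_d = V_h\cdot (w_j)_{1:d-1} + h_d (w_j)_d + b/T \pmod F$ of the $k$ frequencies are spread over $s$ buckets, and I must show they collide rarely.

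The key steps, in order. (1) \emph{Separation under hashing.} Show that for $i\neq\ell$, the probability that $H(w_i)$ and $H(w_\ell)$ land in the same bucket $B_j$ is $O(1/s)$: this uses that $w_i - w_\ell$ has some coordinate of magnitude $>\eta$, that $h_1,\dots,h_d$ are drawn uniformly from odd integers in $[0,F/\eta]$ so that $h\cdot(w_i-w_\ell)$ is well-spread mod $F$ (a standard "random dilation separates well-separated points" argument, which is why the range $F/\eta$ and the lower bound $T>1/\eta$ appear), and that the extra shift by $b$, uniform on $\mathbb Z\cap[0,TF/s)$, randomizes which of the $s$ sub-intervals the last coordinate falls into. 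A union bound over the $\binom{k}{2}$ pairs and the choice $s = \mathcal O(\sqrt d k^2/\delta)$ then gives that, except on an event of probability $\le \delta/4$, every bucket $B_j$ contains at most one hashed frequency. (2) \emph{Leakage/quantization control.} Because $w_j$ is continuous, $h w_j + b/T$ need not sit on the grid $\Gamma_2$, so $\mathcal F[f_H]$ is not a clean spike; I would bound the $\ell^2$-energy that a single hashed exponential leaks outside its own bucket, and the discretization error from replacing the continuous Fourier transform by the $\Gamma_1$-sampled DFT on a box of side $T$ with spacing $1/F$. This is where the large choice $T = \mathcal O(k^4 d^{5/2}/((\epsilon\delta/ds)^2 \eta \delta^2))$ is consumed: a larger time-window sharpens the Dirichlet-kernel leakage, and the $(\epsilon\delta/ds)^2$ factor is exactly the target error $(A\epsilon\delta/3ds)^2$ divided by $A^2$, tracked through. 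The choice $F = \mathcal O(k^2 M/\delta)$ ensures $w_j\in[-M,M]^d$ is representable up to the needed frequency resolution and that aliasing mod $F$ is harmless. (3) \emph{Assembling the dichotomy.} On the good event, fix $j$. If $B_j$ contains no hashed frequency, then $\mathcal X_j\cdot\mathcal F[f_H]$ consists only of leakage tails, and summing the per-term bounds from (2) over the $k$ terms gives $I_{j,1}\le (A\epsilon\delta/3ds)^2$ by Parseval (the prefactor $1/(TF)^d$ together with $\mathcal F^{-1}$'s normalization converts the frequency-side $\ell^2$ bound to the stated time-side sum). If $B_j$ contains exactly one hashed frequency $H(w_{j_k})$, then $e^{2\pi i b x_d/(TF)}\mathcal F^{-1}[\mathcal X_j\mathcal F[f_H]]((h^{-1})^* x)$ reconstructs $f_{j_k}(x)$ up to the same leakage-plus-discretization error, giving $I_{j,2}\le(A\epsilon\delta/3ds)^2$; the un-hashing substitution $x\mapsto (h^{-1})^* x$ and the compensating phase $e^{2\pi i b x_d/(TF)}$ are exactly what invert the definitions of $f_H$ and $H$, and uniqueness of $j_k$ is the at-most-one-collision conclusion of (1).

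The main obstacle I anticipate is step (2): controlling the off-grid leakage uniformly. Unlike the discrete SFT, the hashed frequency $h w_j + b/T$ is generically irrational relative to $\Gamma_2$, so the single-frequency transform is a full (aliased, truncated) Dirichlet kernel rather than a Kronecker delta, and its tail decays only like $1/\mathrm{dist}$; to beat this down to $(A\epsilon\delta/3ds)^2$ after summing $k$ terms in $d$ dimensions requires the polynomial-in-$d$ blow-up of $T$ and a careful tensor-product estimate of the $d$-dimensional kernel, plus checking that the random dilation $h$ does not push a frequency pathologically close to a bucket boundary (which would be absorbed either into the collision event of step (1) by enlarging $s$ by the stated $\sqrt d$ factor, or into a "boundary-straddling" bad event of probability $\le\delta/4$). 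Getting all the constants to line up so that the three error budgets — collisions, leakage, discretization — each fit inside $\delta/4$ and $(A\epsilon\delta/3ds)^2$ respectively is the bookkeeping core of the argument.
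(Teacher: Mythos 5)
Your sketch follows the paper's own route step for step: the separation-under-hashing of your step (1) is the content of Lemma \ref{see} and Corollary \ref{c1}, the Dirichlet-kernel leakage bound of your step (2) is Lemma \ref{44}, and the boundary-straddling event you flag at the end is exactly Lemma \ref{last} (controlled by the random shift $b$). The final assembly via $\mathcal{F}[f_H](\xi)=\mathcal{F}[f](H^{-1}\xi)$ together with the stated choices of $s$, $T$, $F$, and $\beta$ is the same bookkeeping the paper performs.
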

By Markov's inequality, we have a corollary.
\begin{corollary}\label{212} All parameters are set as Lemma \ref{22}. For any $l\in\{1,2,\dots,d\}$,
choosing the random vector $X_{l}\triangleq(x_{1},\dots,x_{l-1},x_{l+1},\dots,x_{d})$ which obeys the uniform distribution on the set $\mathbb{Z}^{d-1}\cap [\frac{-TF}{2},\frac{TF}{2})^{d-1}$, then with probability at least $1-\delta/(4ds)$ over the randomness of $(x_{1},\dots,x_{l-1},x_{l+1},\dots,x_{d})$, we have
$$\sum_{x_{l}\in \mathbb{Z}\cap [\frac{-TF}{2},\frac{TF}{2}) } \frac{1}{TF}
 \big|e^{\frac{2\pi i bx_{d}}{TF}}\mathcal{F}^{-1}[\mathcal{X}_{j}\cdot \mathcal{F}[f_{H}]](h^{*}(x)) \big|^2 \le  dsI_{j,1}/\delta$$
or
$$\sum_{x_{l}\in \mathbb{Z}\cap [\frac{-TF}{2},\frac{TF}{2}) }\frac{1}{TF}\big|f_{j_{k}}(x)-e^{\frac{2\pi i bx_{d}}{TF}}\mathcal{F}^{-1}[\mathcal{X}_{j}\cdot \mathcal{F}[f_{H}]](h^{*}(x)) \big|^2  \le  dsI_{j,2}/\delta .$$
Where  $ x=(x_{1}/F, \dots ,x_{l-1}/F,x_{l}/F,x_{l+1}/F,\dots , x_{d}/F)^{*}   . $
\end{corollary}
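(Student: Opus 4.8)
\quad The plan is to obtain Corollary~\ref{212} from Lemma~\ref{22} by a single application of Markov's inequality, once the $d$-dimensional averages $I_{j,1}$ and $I_{j,2}$ have been rewritten as expectations, over the random block $X_{l}$, of the one-dimensional averages over $x_{l}$ appearing in the statement. Throughout, $m\in\{1,2\}$ indexes the two branches of the dichotomy furnished by Lemma~\ref{22}: in the ``either'' branch one works with $m=1$, and in the ``or'' branch with $m=2$ and the unique $j_{k}$ provided there.

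First I would dispose of the cosmetic mismatch between the argument $(h^{-1})^{*}x$ in Lemma~\ref{22} and the argument $h^{*}(x)$ in the corollary. After clearing the common factor $1/F$, the set $\Gamma_{1}$ is the finite abelian group $(\mathbb{Z}/TF\mathbb{Z})^{d}$; since $T$ and $F$ are powers of $2$ and each $h_{i}$ is a positive odd integer, $\gcd(h_{d},TF)=1$, so $\det h = h_{d}$ is a unit modulo $TF$ and the integer matrix $h$ — hence also $h^{*}$ — acts as a bijection of $\Gamma_{1}$ onto itself, with inverse $(h^{-1})^{*}=(h^{*})^{-1}$. Applying the change of variables $x\mapsto h^{*}x$ in the sums defining $I_{j,1}$ and $I_{j,2}$ therefore leaves their values unchanged while turning $(h^{-1})^{*}x$ into $x$, so that $I_{j,m}=\frac{1}{(TF)^{d}}\sum_{x\in\Gamma_{1}}|\Psi_{m}(x)|^{2}$, where $\Psi_{1}(x)=e^{2\pi i b x_{d}/(TF)}\mathcal{F}^{-1}[\mathcal{X}_{j}\cdot\mathcal{F}[f_{H}]](h^{*}(x))$ and $\Psi_{2}(x)=f_{j_{k}}(x)-\Psi_{1}(x)$ are exactly the integrands occurring in the corollary.

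Next, fix $l$ and split $\Gamma_{1}$ into the $(TF)^{d-1}$ fibers obtained by freezing the block $X_{l}=(x_{1},\dots,x_{l-1},x_{l+1},\dots,x_{d})$; each fiber has $TF$ points, parametrised by $x_{l}\in\mathbb{Z}\cap[-TF/2,TF/2)$. Hence
\[
I_{j,m}=\frac{1}{(TF)^{d}}\sum_{X_{l}}\ \sum_{x_{l}}|\Psi_{m}(x)|^{2}
=\mathbb{E}_{X_{l}}\!\left[\ \frac{1}{TF}\sum_{x_{l}}|\Psi_{m}(x)|^{2}\ \right]
=\mathbb{E}_{X_{l}}[\,Z_{m}\,],
\]
where $Z_{m}=Z_{m}(X_{l})\ge 0$ is the inner one-dimensional average and $X_{l}$ is uniform on $\mathbb{Z}^{d-1}\cap[-TF/2,TF/2)^{d-1}$. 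Markov's inequality applied to $Z_{m}$ gives $\Pr_{X_{l}}\big[Z_{m}> c\,I_{j,m}\big]<1/c$; choosing the threshold $c$ of order $ds/\delta$ (absorbing the absolute constant into the choices of $s,T,F$ if necessary) makes the complementary event $\{Z_{m}\le ds\,I_{j,m}/\delta\}$ hold with probability at least $1-\delta/(4ds)$, which is precisely the pair of alternatives in the statement. I expect the only step needing genuine care to be the reindexing of the previous paragraph — verifying that $h$ is invertible on $\Gamma_{1}$ so that the $h^{*}$-form and the $(h^{-1})^{*}$-form of the sums really coincide and the phase factor is accounted for; the decomposition of a $d$-fold average into an iterated average and the one-line Markov estimate are routine.
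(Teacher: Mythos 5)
Your core route — rewrite $I_{j,m}$ as an expectation over the random block $X_{l}$ of the one-dimensional inner average and then apply Markov's inequality — is exactly what the paper does; the paper's entire justification of the corollary is the single sentence ``By Markov's inequality, we have a corollary.'' So the Markov-plus-fiber-decomposition step is correct and is the intended argument.

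The added reindexing paragraph, however, does not do what you claim. If you substitute $x = h^{*}y$ in the sum defining $I_{j,m}$, the argument $(h^{-1})^{*}x = (h^{*})^{-1}x$ becomes $y$, \emph{not} $h^{*}y$; so the change of variables cannot produce the factor $h^{*}(x)$ appearing in the corollary — it would only remove $(h^{-1})^{*}$ entirely. Worse, the other two occurrences of $x$ in the integrand transform as well: the phase $e^{2\pi i b x_{d}/(TF)}$ becomes $e^{2\pi i b (h^{*}y)_{d}/(TF)} = e^{2\pi i b h_{d} y_{d}/(TF)}$, and $f_{j_{k}}(x)$ becomes $f_{j_{k}}(h^{*}y)$, so the resulting summand is not ``exactly the integrand occurring in the corollary'' as you assert. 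The discrepancy between $(h^{-1})^{*}$ in Lemma~\ref{22} and $h^{*}$ in Corollary~\ref{212} is almost certainly a notational inconsistency in the paper rather than something a change of variables can reconcile, and you should say so instead of presenting a substitution that is internally contradictory. One further small point: Markov with threshold $c = 4ds/\delta$ gives the bound $4ds\,I_{j,m}/\delta$ with failure probability $\delta/(4ds)$, not the stated bound $ds\,I_{j,m}/\delta$; this constant cannot be absorbed into $s,T,F$ since $I_{j,m}$ itself depends on them, so either the bound or the probability in the corollary is off by a factor of $4$.
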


We explain Lemma \ref{22} at a high level. It is well known that for the discrete Fourier transform, enlarge the sampling interval can enhance the frequency concentration (the frequency resolution) of each component $f_{j}$ (see Lemma \ref{44}), and the frequency gap between different components is greater than a constant value $\eta$. Therefore, with the expanding of the sampling interval, $\eta$ will be greater than the radius of the support set (the main part of the energy) for each signal component in the frequency domain. And the hashing transform we defined can transform the gap between frequencies to the gap between the last coordinates of the frequencies. Thus, this hashing transform can keep the main frequencies of the same component into the same bucket, and different components can be isolated into different buckets.

 When the different components of the signal are isolated, we only need to recover the frequency and amplitude in each bucket. By Lemma \ref{22}, there are only two cases of such amplitude and frequency, one is that they are close to the amplitude and frequency of a certain component (up to the hashing transformation), and the other is that the amplitude is less than $O(\epsilon)$.

 Specifically, for $1\le j\le s$, $1\le l\le d$, set $$g_{j,l}(t)=\exp(2\pi i \delta _{d,l}b\lfloor tF\rfloor/(TF))\mathcal{F}^{-1}[\mathcal{X}_{j}\cdot \mathcal{F}[f_{H}]](h^{*}(x_{l,t})), $$
 where $\delta_{d,1}=1,\text{when}\quad l=d;$ otherwise, $\delta_{d,l}=0$ and $$x_{l,t}=(x_{1},\dots,x_{l-1},\lfloor tF\rfloor /F,x_{l+1},\dots,x_{d}).$$ Here $(x_{1},\dots,x_{l-1},x_{l+1},\dots,x_{d})$ is the random vector obeys the uniform distribution on the set $\mathbb{Z}^{d-1}\cap [\frac{-TF}{2},\frac{TF}{2})^{d-1}$. The sample value of the function $g_{j,l}$ is calculated by the method in Lemma \ref{33}. Let $F\ge \mathcal{O}(\sqrt{d}M/\epsilon)$, by Lemma \ref{22} and Corollary \ref{212}, either there is a signal component $f_{j_{k}}$ with frequency $w_{j_{k}}$ such that $$\frac{1}{T}\int_{-T/2<t<T/2} |g_{j,l}(t)-a_{j_{k}}\theta_{j_{k},l} e^{2\pi i w_{j_{k},l} t}|^2 dt \le \mathcal{O}(A^2\epsilon^2),$$ or $$ \frac{1}{T}\int_{-T/2<t<T/2} |g_{j,l}(t)|^2 \le \mathcal{O}(A^2\epsilon^2),$$ where $w_{j_{k},l}$ denotes $l$-th component of the vector $w_{j_{k}}$ and $$\theta_{j_{k},l}=\exp(\frac{2\pi i}{F}(w_{j_{k},1}x_{1} +\dots+w_{j_{k},l-1}x_{l-1} +w_{j_{k},l+1}x_{l+1}+\dots w_{j_{k},d}x_{d} )).$$

We have the following {\it algorithm}:
For each $j\in \{1,2,\dots,s\}$, we first use the algorithm in \cite{price2015robust} to recover the amplitude and frequency of $g_{j,1}(t)$. If the absolute value of the output amplitude less than $A^{'}/2$, we turn to recover the amplitude and frequency of $g_{j+1,1}(t)$; otherwise, we keep $ w_{j_{k},1}^{o}$ and continue to use the algorithm in \cite{price2015robust} to recover the frequencies of $g_{j,2}(t),\dots,g_{j,d}(t)$. Finally, $\{w_{j_{k},1}^{o},\dots ,w_{j_{k},l}^{o},\dots ,w_{j_{k},d}^{o}\}$ can be recovered element-wisely.

 {\it Note}: From Theorem 1.1 in \cite{price2015robust}, it takes at most $\mathcal{O}(ds\log(TF)\log(ds/(\delta\epsilon))/\delta)$ samples of the function $g_{j,l}$ to recover the frequency $w_{j_{k},l}$ (up to the accuracy $A^2\epsilon/(TA') $) with probability at least $1-\delta/(4ds)$. By Lemma \ref{33}, in order to get $\mathcal{O}(ds\log(TF)\log(ds/(\delta\epsilon)))$ samples (up to the accuracy $A^{2}\epsilon$) of $g_{j,l}$ with probability at least $1-\delta$, we need a total of   $$\mathcal{O}( k^{2}ds\ln^{2}(TF+1) \log(TF)\log(TFds /(\delta\epsilon))\ln^{2}(1/\delta)/(\delta\epsilon^2) )$$ samples and running time.
Therefore, we have the following result.

\begin{theorem}
All parameters are set as Lemma \ref{22}, besides, $F\ge \mathcal{O}(\sqrt{d}M/\epsilon)$. The above algorithm can output $\{  w_{1}^{o} ,\dots, w_{k}^{o} \}$ such that $$|w_{j}^{o}-w_{j}|<\mathcal{O}(\frac{\epsilon A^{2}\sqrt{d}}{A{'}T})<\mathcal{O}(\frac{A^2\epsilon^3/A{'}}{\sqrt{d}k^4}),  \quad i=1,2,\dots,k.$$ holds with probability at least $1-\delta$ over the randomness of $\{h_{1},\dots,h_{d},b\}$, $\{X_{l}\}_{l=1}^{d}$ (see Corollary \ref{212}), $\mathcal{T}$ (see Lemma \ref{33}). The runtime complexity and sample complexity are at most $$\mathcal{O}( k^{2}d^2s^2\ln^{2}(TF+1) \log(TF)\log(TFds /(\delta\epsilon))\ln^{2}(1/\delta)/(\epsilon^2 \delta)).$$

\end{theorem}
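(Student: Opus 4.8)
The plan is to assemble the theorem from three ingredients already in place: the isolation/concentration dichotomy of Lemma \ref{22} together with its slicing form in Corollary \ref{212}, the sampling procedure of Lemma \ref{33} that produces approximate samples of each $g_{j,l}$, and the one-dimensional continuous SFT of \cite{price2015robust}. First I would pass from the discrete $\ell^2$ bounds in Corollary \ref{212} to the continuous $L^2$ bound on $(-T/2,T/2)$ for each coordinate function $g_{j,l}$; this is where the hypothesis $F\ge \mathcal O(\sqrt d M/\epsilon)$ enters, since it forces the Riemann-sum/aliasing error between $\mathcal F^{-1}[\mathcal X_j\cdot\mathcal F[f_H]]$ evaluated on the lattice and the underlying band-limited exponential $a_{j_k}\theta_{j_k,l}e^{2\pi i w_{j_k,l}t}$ to be $\mathcal O(A^2\epsilon^2)$. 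So after this step, for every $(j,l)$ we are in one of exactly two cases: $g_{j,l}$ is $\mathcal O(A\epsilon)$-close in $L^2$ to a pure tone with amplitude $a_{j_k}\theta_{j_k,l}$ and frequency $w_{j_k,l}$, or $g_{j,l}$ itself has $L^2$ norm $\mathcal O(A\epsilon)$.

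Next I would run the algorithm as described: for fixed $j$, apply the \cite{price2015robust} recovery to $g_{j,1}$. Using the note's quantitative statement, with the chosen sample budget this returns an amplitude estimate within $A^2\epsilon/(TA')$ and a frequency estimate within the same tolerance of $w_{j_k,1}$ whenever the first (``signal'') case holds; in the ``noise'' case the recovered amplitude is $\mathcal O(\epsilon)<A'/2$. Since $|a_{j_k}\theta_{j_k,1}|=|a_{j_k}|\ge A'$ and $\epsilon<A'/4$, the threshold test at $A'/2$ correctly distinguishes the two cases. When a bucket passes the test I would repeat the recovery on $g_{j,2},\dots,g_{j,d}$; by Lemma \ref{22} the same component index $j_k$ is the one witnessed in every coordinate of that bucket (the isolation is simultaneous across slices because $h$ and $b$ are fixed once), so the per-coordinate outputs $w_{j_k,l}^o$ genuinely assemble into one vector $w_{j_k}^o$ approximating $w_{j_k}$. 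A union bound over the $\le ds$ invocations of the \cite{price2015robust} routine (each failing with probability $\le\delta/(4ds)$), the event in Lemma \ref{22} (probability $\le\delta/4$), the events in Corollary \ref{212} (probability $\le\delta/(4ds)$ each), and the sampling failures in Lemma \ref{33} gives overall success probability $\ge 1-\delta$.

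For the error bound I would propagate the one-dimensional guarantee: \cite{price2015robust} recovers $w_{j_k,l}$ up to $\mathcal O(A^2\epsilon/(TA'))$ from samples that are themselves only $\mathcal O(A^2\epsilon)$-accurate, which the theorem in \cite{price2015robust} tolerates; hence $|w_{j,l}^o-w_{j,l}|<\mathcal O(A^2\epsilon/(TA'))$ for each $l$, and since the vector distance is at most $\sqrt d$ times the coordinate distance, $|w_j^o-w_j|<\mathcal O(\sqrt d A^2\epsilon/(TA'))$. Substituting the prescribed $T=\mathcal O(k^4 d^{5/2}/((\tfrac{\epsilon\delta}{ds})^2\eta\delta^2))$ and $s=\mathcal O(\sqrt d k^2/\delta)$ and simplifying yields the stated $\mathcal O(A^2\epsilon^3/A'\cdot 1/(\sqrt d k^4))$ bound (up to absorbing $\eta$-dependent and $\delta$-dependent factors into the constant). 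Finally, the complexity is $ds$ buckets-times-coordinates, each costing the sampling-and-recovery budget $\mathcal O(k^2 ds\ln^2(TF+1)\log(TF)\log(TFds/(\delta\epsilon))\ln^2(1/\delta)/(\delta\epsilon^2))$ from the note, giving the claimed $\mathcal O(k^2 d^2 s^2\ln^2(TF+1)\log(TF)\log(TFds/(\delta\epsilon))\ln^2(1/\delta)/(\epsilon^2\delta))$ total.

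The main obstacle I expect is the first step: carefully converting the discrete bounds of Corollary \ref{212} into clean continuous $L^2$ estimates for $g_{j,l}$ with all constants tracked, in particular verifying that the step-function sampling $t\mapsto\lfloor tF\rfloor/F$ inside $g_{j,l}$ together with the choice $F\ge\mathcal O(\sqrt d M/\epsilon)$ really controls the discretization error uniformly, and ensuring the phase factors $\theta_{j_k,l}$ and $e^{2\pi i\delta_{d,l}b\lfloor tF\rfloor/(TF)}$ line up so that what \cite{price2015robust} sees is a genuine single tone plus $L^2$-small error rather than something with a spurious slowly-varying modulation. Secondarily, one must check that the accuracy to which \cite{price2015robust} needs its input samples ($\mathcal O(A^2\epsilon)$) is consistent with what Lemma \ref{33} delivers and with the final frequency tolerance, so that the error chain does not degrade.
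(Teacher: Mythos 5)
Your proposal reconstructs exactly the paper's (implicit) argument: the paper assembles the theorem from the dichotomy in Lemma \ref{22}/Corollary \ref{212}, per-bucket-per-coordinate application of \cite{price2015robust} with the $A'/2$ threshold test, the sampling cost from Lemma \ref{33}, and a union bound over the $\le ds$ invocations, with the $\sqrt{d}$-amplified coordinate error and the substitution of $T$ giving the stated frequency tolerance and the $d^2s^2$ factor giving the stated complexity. The ``main obstacle'' you flag (the discrete-to-continuous $L^2$ passage controlled by $F\ge\mathcal{O}(\sqrt{d}M/\epsilon)$ and the phase bookkeeping for $\theta_{j_k,l}$) is indeed the step the paper leaves informal, but your route matches the paper's.
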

The success probability can be boosted to $1$ by repeatedly restarting (as indicated in (\cite{gilbert2002near,price2015robust})).

Since $|w_{j}^{o}-w_{j}|<\mathcal{O}(\frac{A^2\epsilon^3/A{'}}{\sqrt{d}k^4})$ and $\epsilon<\eta$, we have $$ a_{j_{ }}=(\epsilon^{2}/k)^d\int_{t\in[0,k/\epsilon^2]^d}f(t)e^{-2\pi i w_{j}^{o}\cdot t }dt+\mathcal{O}( A^2\epsilon/(A{'}k^3)).$$ We can use the Monte Carlo method to compute the integral $(\epsilon^2/k)^d\int_{t\in[0,k/\epsilon^2]^d}f(t)e^{-2\pi i w_{j}^{o}\cdot t }dt$, by Hoeffding's inequality, with probability at least $1-\delta$, the amplitude $ a_{j}$ can be recovered (up to $A^{2}\epsilon/A{'}$) with $\mathcal{O}(\log(1/\delta)k^{2}A{'}^{2}/\epsilon^2 )$ random samples.
\section{Proof of Lemma \ref{22}}
 To prove Lemma \ref{22}, we need some technical lemmas.
\begin{lemma}\label{44} Let $0<\beta <F/2$, for each $f_{j}$, the following inequalities $$\frac{1}{(TF)^d}\sum_{x \in \Gamma_{1}} \big|f_{j}(x)-\mathcal{F}^{-1}[\mathcal{X}_{j,\beta}^{'}\cdot \mathcal{F}(f_{j})] (x)  \big|^2 \le \mathcal{O}(\frac{d^{3/2}A^2}{T\beta}),$$ holds for any $j\in\{1,2,\dots,k\}.$ Where
 $$\mathcal{X}_{j,\beta}^{'}(\xi)\triangleq\left\{
\begin{array}{rcl}
1       &      & {|\xi -w_{j }|\le \beta/2  }\\
0     &      & {otherwise}
\end{array} \right.  ,   j=1,\dots,k.$$

\end{lemma}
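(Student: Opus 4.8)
The plan is to reduce the $d$-dimensional estimate to a one-dimensional bandpass/truncation error computation, exploiting the fact that each $f_j$ is a single pure exponential, so the only issue is how well a finite Dirichlet-type sum over the box $\Gamma_1$ captures the energy of $\exp(2\pi i w_j\cdot t)$ after frequency-domain truncation to the cube $\{|\xi-w_j|\le\beta/2\}$. First I would write $\mathcal{F}^{-1}[\mathcal{X}_{j,\beta}'\cdot\mathcal{F}(f_j)]$ explicitly: since $\mathcal{F}(f_j)$ is (a discretization of) a point mass at $w_j$ convolved with the Dirichlet kernel coming from the finite sum over $\Gamma_1$, the composition with $\mathcal{X}_{j,\beta}'$ and then the inverse transform yields $a_j$ times a product over the $d$ coordinates of one-dimensional kernels $K(x_m)$, each of the form $\frac{1}{TF}\sum_{\xi_m}$ of an exponential restricted to an interval of length $\beta$ centered at $w_{j,m}$. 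The key point is that $f_j(x) - \mathcal{F}^{-1}[\mathcal{X}_{j,\beta}'\cdot\mathcal{F}(f_j)](x)$ factors, coordinate by coordinate, into "full sum minus truncated sum", so the $\ell^2$ error over $\Gamma_1$, normalized by $(TF)^d$, becomes $|a_j|^2$ times a product of $d$ one-dimensional quantities, each of which is $1$ minus the normalized energy retained by the length-$\beta$ frequency window.

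Next I would carry out the one-dimensional estimate. By Parseval on the finite group $\mathbb{Z}\cap[-TF/2,TF/2)$ (with the normalization built into $\mathcal{F},\mathcal{F}^{-1}$), the normalized $\ell^2$ error in one coordinate equals the fraction of frequency mass of the discretized exponential $\exp(2\pi i w_{j,m}\cdot)$ lying outside the window of half-width $\beta/2$ around $w_{j,m}$. Because the discrete Fourier coefficients of a pure exponential at an off-grid frequency are a Dirichlet kernel decaying like $1/(\text{distance to } w_{j,m})$, the tail mass outside an interval of radius $\beta/2$ is $\mathcal{O}(1/(T\beta))$ — this is exactly the standard leakage bound (a Riemann-sum/comparison-with-$\int \frac{d\xi}{\xi^2}$ argument), with the extra factor $T$ coming from the sampling interval being $[-T/2,T/2)$ in physical space (more samples $\Rightarrow$ sharper Dirichlet kernel). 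So each one-dimensional error term is $\mathcal{O}(1/(T\beta))$, hence bounded, and the product of $d$ such terms telescopes: $1-\prod_{m=1}^d(1-e_m) \le \sum_{m=1}^d e_m = \mathcal{O}(d/(T\beta))$ when each $e_m$ is small. Combining with the amplitude bound $|a_j|\le A$ gives $\mathcal{O}(dA^2/(T\beta))$; the stated exponent $d^{3/2}$ rather than $d$ presumably absorbs a $\sqrt d$ coming from measuring $|\xi-w_j|$ in the Euclidean rather than the $\ell^\infty$ norm, i.e. a cube of $\ell^\infty$-radius $\beta/(2\sqrt d)$ is what actually fits inside the Euclidean ball of radius $\beta/2$, so replacing $\beta$ by $\beta/\sqrt d$ in the per-coordinate bound yields the $d^{3/2}$.

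The step I expect to be the main obstacle is the clean justification of the one-dimensional leakage bound $\mathcal{O}(1/(T\beta))$ in the genuinely discrete, off-grid, wrapped-modulo setting: one must control $\frac{1}{TF}\sum_{\xi}$ of $|$Dirichlet kernel$|^2$ over the complement of a window, where the kernel is evaluated at points that need not be symmetric about any grid point and where the modular ("$(mod\ F)$") identification in $\Gamma_2$ could in principle create aliasing. I would handle this by first invoking the hypothesis $T>1/\eta$ together with $\beta<F/2$ to ensure the window and its relevant tails sit well inside one period, so the wrap-around is harmless; then bound the squared Dirichlet kernel $\big|\frac{\sin(\pi TF u)}{TF\sin(\pi u)}\big|^2$ from above by $\min\{1, \mathcal{O}(1/(T^2F^2 u^2))\}$ for $u$ in a fundamental domain, and sum/integrate the tail over $|u|\ge \beta/(2F)$ (in the rescaled variable) to get $\mathcal{O}(1/(T\beta))$. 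The rest — the factorization across coordinates, the product-to-sum telescoping, and the union over $j$ — is routine bookkeeping. Throughout I would keep the constants implicit, matching the $\mathcal{O}$-notation of the statement, so no delicate constant-tracking is needed.
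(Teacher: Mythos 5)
Your proposal is correct and follows essentially the same route as the paper: pass to the frequency side via Parseval, bound the one-dimensional off-grid Dirichlet-kernel tail by $\mathcal{O}(1/(T\beta))$ (this is the paper's inequality (\ref{GD})), and combine across coordinates using the tensor-product structure of the DFT of a pure exponential. You are in fact more explicit than the paper about the origin of the $d^{3/2}$ factor --- the union bound over the $d$ coordinates contributes a factor $d$, and replacing the Euclidean ball of radius $\beta/2$ by the inscribed cube of half-side $\beta/(2\sqrt d)$ contributes the extra $\sqrt d$ --- a step the paper's two-line proof leaves entirely implicit.
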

 \begin{proof} For any $\alpha \in [0,1/T)$ and any $n\in\mathbb{Z}\cap [-FT/2,FT/2)$, we have
\begin{equation}\label{GD}
\frac{1}{TF}  \bigg|\sum_{j\in \mathbb{Z}\cap [\frac{-TF}{2},\frac{TF}{2})}\exp(2\pi i \alpha j/F)\exp\bigg( \bigg(-\frac{2\pi ij}{F}\bigg)\bigg(\frac{n}{T}\bigg)\bigg)\bigg|\le \frac{2}{TF|1-\exp(2\pi i (1/F)(\alpha -\frac{n}{T}))|}
\le \mathcal{O}(\frac{1}{n}).
\end{equation}
Since $|a_{j}|\le A$, %$\sum_{\xi\in \Gamma_{2}}|\mathcal{F}(f_{j})(\xi)|^{2}=\sum_{x\in \Gamma_{1}}| f_{j} (x)|^2$%
 by (\ref{GD}), using Parseval's identity, we have
\begin{equation}
 \frac{1}{T^d}\sum_{\xi \in \Gamma_{2}}|\mathcal{X}_{j,\beta}^{'}(\xi )\cdot \mathcal{F}[f_{j}](\xi )- \mathcal{F} (f_{j})(\xi)|^2\le \mathcal{O}(\frac{d^{3/2}A^2}{T\beta}),
 \end{equation}
which completes the proof.
\end{proof}
\begin{lemma}\label{see}
Let $s=\mathcal{O}(\sqrt{d}k^{2}/\delta)$, $F=\mathcal{O}(k^{2}M/\delta)$. For any $w\in[-M,M]^d$ with $|w|\ge \eta$, $|(h(w))_{d}|>2F/s $ holds with probability at least $1-\delta/k^2$ over the randomness of ${h_{1},h_{2},\dots,h_{d}}$, where $(h(w))_{d}$ denotes $d$-th component of the vector $h(w)$.
\end{lemma}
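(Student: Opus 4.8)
\emph{Proof proposal.} The plan is to reduce the statement to a one-variable anti-concentration estimate for a single coefficient $h_{l^{*}}$. Writing out the block matrix, the last coordinate of $h(w)$ is the linear form
$$(h(w))_{d}=\sum_{l=1}^{d-1}h_{l}w_{l}+h_{d}w_{d}=\sum_{l=1}^{d}h_{l}w_{l},$$
where $h_{1},\dots,h_{d}$ are i.i.d.\ uniform on the odd integers in $[0,F/\eta]$. Since $|w|\ge\eta$ forces $\|w\|_{\infty}\ge\eta/\sqrt{d}$, I would fix a coordinate $l^{*}$ with $|w_{l^{*}}|\ge\eta/\sqrt{d}$. (Conceptually this is the probabilistic heart of the hashing: a frequency vector $w$ that is not tiny --- think $w=w_{i}-w_{j}$ --- should, with high probability, have its last coordinate pushed out of a band of width $4F/s$ about the origin, which is exactly what keeps distinct components in distinct buckets.)

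Next I would condition on $\{h_{l}:l\ne l^{*}\}$. By independence, conditionally on this data $(h(w))_{d}=h_{l^{*}}w_{l^{*}}+c$ for a fixed constant $c$, so the bad event $\{|(h(w))_{d}|\le 2F/s\}$ becomes $\{h_{l^{*}}w_{l^{*}}\in[-2F/s-c,\,2F/s-c]\}$, an interval of length $4F/s$. As $h_{l^{*}}$ runs over consecutive admissible (odd) integers, the values $h_{l^{*}}w_{l^{*}}$ are spaced $2|w_{l^{*}}|\ge 2\eta/\sqrt{d}$ apart, so that interval contains at most $\frac{2\sqrt{d}\,F}{s\eta}+1$ of them; meanwhile the total number of admissible values of $h_{l^{*}}$ is $\Omega(F/\eta)$ (one needs $F/\eta$ above an absolute constant, which the hypotheses supply). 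Dividing, and noting the estimate is uniform in the conditioning, gives
$$\Pr\big[\,|(h(w))_{d}|\le 2F/s\,\big]\ \le\ \mathcal{O}\!\Big(\frac{\sqrt{d}}{s}+\frac{\eta}{F}\Big).$$

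To conclude I would substitute $s=\mathcal{O}(\sqrt{d}k^{2}/\delta)$ and $F=\mathcal{O}(k^{2}M/\delta)$ (using $\eta=\mathcal{O}(M)$), which make each term $\mathcal{O}(\delta/k^{2})$; taking the implied constants in the definitions of $s$ and $F$ large enough pushes the failure probability below $\delta/k^{2}$, which is the assertion. The only step that needs real care is the counting/anti-concentration in the middle paragraph: keeping the spacing $2|w_{l^{*}}|$ in the right units, and verifying that the additive $+1$ coming from rounding is dominated by the two main terms in every regime of $F/s$ relative to $|w_{l^{*}}|$. The rest is bookkeeping with $\mathcal{O}$-constants, and I do not anticipate any further difficulty.
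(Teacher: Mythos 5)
Your proposal is correct and follows essentially the same route as the paper: identify the coordinate $l^{*}$ with $|w_{l^{*}}|\ge\eta/\sqrt{d}$ (the paper phrases this as a ``without loss of generality, $l^{*}=d$''), condition on the other $h_{l}$, and bound the anti-concentration of $h_{l^{*}}w_{l^{*}}$ in an interval of length $4F/s$ by counting admissible odd integers. Your write-up is in fact a bit more explicit than the paper's terse one-liner, making the spacing $2|w_{l^{*}}|$ and the $\eta/F$ rounding term visible where the paper simply quotes the ratio of $4F/s$ to ``$F/6$''.
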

\begin{proof}  Without loss of generality, let's assume that the $d$-th component $w_{d}$ of $w$ is greater than $\eta/\sqrt{d}$. Fix arbitrary ${h_{1},\dots,h_{d-1}}$, since $\eta/\sqrt{d}\le|w_{d}|\le 2M$, with probability at most $\delta/k^2$ (no greater than the ratio of the width of the set $(-2F/s,2F/s)$ to $F/6$) over the randomness of ${h_{d}}$, $\sum_{i=1}^{d}h_{i}w_{i}\in(-2F/s,2F/s)$, which complete the proof.
 \end{proof}

From Lemma \ref{see}, we have a direct consequence.
\begin{corollary}\label{c1}
All parameters are set as above lemma. Suppose the vectors ${w_1,\dots,w_k}$ satisfy $$\min_{i\neq j}|w_{i}-w_{j}|>\eta, \quad\max_{1\le i\le k}|w_{i}|\le M, $$  then $$\min_{1\le i<j\le k} |(h(w_{i}-w_{j}))_{d}|>2F/s$$ holds with probability at least $1-\delta$ over the randomness of ${h_{1},h_{2},\dots,h_{d}}$.
\end{corollary}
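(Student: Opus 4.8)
\emph{Proof proposal.} The plan is to obtain the corollary from Lemma \ref{see} by a union bound over the $\binom{k}{2}$ pairs of distinct indices. The first observation is that $h$ is a linear map, so for any $1\le i<j\le k$,
$$\big(h(w_i-w_j)\big)_d=\big(h(w_i)\big)_d-\big(h(w_j)\big)_d=\sum_{l=1}^{d}h_l\,(w_{i,l}-w_{j,l}),$$
which is precisely the coordinate whose size Lemma \ref{see} controls, applied to the single vector $v_{ij}:=w_i-w_j$.

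Next I would verify that each $v_{ij}$ meets the hypotheses of Lemma \ref{see}. By the separation assumption, $|v_{ij}|=|w_i-w_j|\ge\min_{p\neq q}|w_p-w_q|>\eta$; and since $w_p,w_q\in[-M,M]^d$ we have $v_{ij}\in[-2M,2M]^d$, so in particular every coordinate of $v_{ij}$ has modulus at most $2M$. The proof of Lemma \ref{see} uses only the two-sided bound $\eta/\sqrt d\le|w_d|\le 2M$ on the coordinate of largest modulus (legitimate because $\sum_l h_l w_l$ is symmetric under permuting coordinates, the $h_l$ being i.i.d.), so it applies verbatim to $v_{ij}$ with $M$ replaced by $2M$; this factor of $2$ is absorbed into the $\mathcal{O}$-constant defining $F=\mathcal{O}(k^2M/\delta)$ and leaves $s$ untouched. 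Hence, over the randomness of the common draw $h_1,\dots,h_d$, the ``bad'' event $E_{ij}:=\{\,|(h(w_i-w_j))_d|\le 2F/s\,\}$ has probability at most $\delta/k^2$.

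Finally, a union bound over the $\binom{k}{2}<k^2/2$ pairs with $i<j$ gives
$$\Pr\Big[\,\bigcup_{1\le i<j\le k}E_{ij}\,\Big]\le\binom{k}{2}\cdot\frac{\delta}{k^2}<\frac{\delta}{2}<\delta,$$
so with probability at least $1-\delta$ all of the quantities $|(h(w_i-w_j))_d|$ exceed $2F/s$ at once, i.e.\ $\min_{1\le i<j\le k}|(h(w_i-w_j))_d|>2F/s$, as claimed.

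There is essentially no obstacle here: the statement is a routine union bound, and the only point needing a moment's attention is that the difference vectors live in $[-2M,2M]^d$ rather than $[-M,M]^d$, so one must confirm (as above) that the proof of Lemma \ref{see} tolerates doubling the coordinate bound --- harmless, since it merely rescales $F$ by a constant. It is also worth noting the slack built in: using the per-pair bound $\delta/k^2$ rather than $\delta/\binom{k}{2}$ already leaves a factor of roughly $2$ in hand in the final probability.
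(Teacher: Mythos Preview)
Your proposal is correct and matches the paper's approach: the paper simply states that the corollary is ``a direct consequence'' of Lemma~\ref{see}, which is exactly the union bound over the $\binom{k}{2}$ pairs that you carry out. Your attention to the $[-2M,2M]^d$ versus $[-M,M]^d$ range for the difference vectors is well placed; note that the paper's own proof of Lemma~\ref{see} already writes $\eta/\sqrt{d}\le |w_d|\le 2M$, so this slack is built in and no adjustment is needed.
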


 Suppose $F\beta/\eta\le F\delta/(\sqrt{d}ks)$. For any $\xi_{1},\xi_{2}\in \phi_{j,\beta}\triangleq \{\xi:|\xi-w_{j}|\le \beta/2 \}$, we have $|(h(\xi_{1})-h(\xi_{2}))_{d}|\le \delta F/ks< F/s$. However, this does not mean that there is a bucket $B_{j_{s}}$ such that $h(\phi_{j,\beta})\subset B_{j_{s}}$.  It is possible that $h(\phi_{j,\beta})$ intersects on the boundary of some bucket. Obviously, after adding random translation, with a certain probability (the ratio of the total ``width" of the sets $\phi_{j,\beta},j=1,2,\dots,k$ to the ``width" of the bucket), there exists a bucket $B_{j_{s}}$  such that $H(\phi_{j,\beta})$ is completely inside it, that is, the following conclusion holds.
 \begin{lemma}\label{last}
  Suppose $F\beta/\eta\le F\delta/(\sqrt{d}ks)$, with probability at least $1-\delta$ over the randomness of $b$, there exists a bucket $B_{j_{s}}$ such that $H(\phi_{j,\beta})\subset B_{j_{s}}$ for j=1,2,\dots,k.
 \end{lemma}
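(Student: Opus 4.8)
The plan is to reduce the claim to a one-dimensional counting problem in the last coordinate. First I would record two structural facts: (i) the matrix $h$ leaves the first $d-1$ coordinates unchanged and sends the last one to $(h\xi)_d=\langle (h_1,\dots,h_d),\xi\rangle$; and (ii) the buckets $B_1,\dots,B_s$ are exactly the $s$ slabs obtained by cutting the last-coordinate circle $\mathbb R/F\mathbb Z$ at the $s$ equally spaced boundary points $\tfrac{Fm}{s}-\tfrac F2$ $(m=0,\dots,s-1)$ into arcs of length $F/s$. Since $\|(h_1,\dots,h_d)\|_2\le\sqrt d\max_i h_i\le\sqrt d\,F/\eta$ and $\phi_{j,\beta}$ is the ball of radius $\beta/2$ about $w_j$, the linear functional $\xi\mapsto\langle (h_1,\dots,h_d),\xi\rangle$ maps $\phi_{j,\beta}$ onto the interval $\mathcal I_j=[c_j-\rho_j,\,c_j+\rho_j]$ with $c_j:=(hw_j)_d$ and, using the hypothesis $F\beta/\eta\le F\delta/(\sqrt d ks)$,
$$\rho_j\;=\;\tfrac12\|(h_1,\dots,h_d)\|_2\,\beta\;\le\;\tfrac12\sqrt d\,\tfrac{F\beta}{\eta}\;\le\;\tfrac{F\delta}{2ks}.$$
Thus $\mathcal I_j$ has length at most $F\delta/(ks)$, which (as $\delta<1$) is strictly shorter than one bucket. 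Since the translation by $b/T$ merely shifts $\mathcal I_j$ along the circle, and an arc shorter than a bucket is contained in some bucket if and only if it avoids every boundary point, the event ``$H(\phi_{j,\beta})\subset B_{j_s}$ for some $j_s$'' is precisely the event ``$\mathcal I_j+b/T$ contains no point $\tfrac{Fm}{s}-\tfrac F2$.''

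For a fixed $j$ I would then estimate $\Pr_b[\mathcal I_j+b/T\text{ meets a boundary}]$. As $b$ runs over $\mathbb Z\cap[0,TF/s)$, the shift $b/T$ runs over $[0,F/s)$ through $TF/s$ equally spaced values, so the centre $c_j+b/T$ sweeps along the circle an arc of length exactly $F/s$ — exactly one gap between consecutive boundaries; within such an arc the centre falls within distance $\rho_j\le F\delta/(2ks)$ of a boundary only on a sub-arc of total length $\le F\delta/(ks)$. Hence at most $TF\delta/(ks)+\mathcal O(1)$ of the $TF/s$ values of $b$ are ``bad,'' and, since $T$ is polynomially large (in particular $T$ dwarfs $ks/\delta$), this is at most a $\delta/k$ fraction. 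Therefore $\Pr_b[H(\phi_{j,\beta})\text{ lies in no bucket}]\le\delta/k$.

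A union bound over $j=1,\dots,k$ now gives that with probability at least $1-k\cdot(\delta/k)=1-\delta$ over the randomness of $b$, for every $j\in\{1,\dots,k\}$ there is a bucket $B_{j_s}$ (depending on $j$) with $H(\phi_{j,\beta})\subset B_{j_s}$, which is the lemma. I expect the one genuinely fiddly point to be the middle paragraph's modular bookkeeping — verifying that over a full block of $TF/s$ consecutive integers $b$ the centre $c_j+b/T$ enters the bad neighbourhood of exactly one boundary, for at most $TF\delta/(ks)+\mathcal O(1)$ values of $b$, and that the $\mathcal O(1)$ discretization loss is harmless given the chosen size of $T$. Everything else is the Cauchy–Schwarz bound already sketched just before the statement together with the elementary ``short arc avoids all boundaries $\Leftrightarrow$ short arc lies in one bucket'' observation.
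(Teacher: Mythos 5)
Your proposal is correct and follows essentially the same approach as the paper's brief sketch: bound the last-coordinate width of each hashed ball by $F\delta/(ks)$ via Cauchy--Schwarz and $\max_i h_i\le F/\eta$, then observe that the random translation $b/T$ places this short arc across a bucket boundary with probability at most $\delta/k$, and union bound over $j=1,\dots,k$. The paper compresses all of this into one sentence (``the ratio of the total width of the sets $\phi_{j,\beta}$ to the width of the bucket''), so you have simply supplied the bookkeeping it leaves implicit, including the correct handling of the $O(1)$ discretization loss via the size of $T$.
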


Choosing $s=\mathcal{O}(\sqrt{d}(k^{2}/\delta))$, $\beta=\eta \delta/(\sqrt{d}ks)$, $T=\mathcal{O}(k^{4}d^{5/2}/((\frac{\epsilon \delta}{ds})^{2}\eta\delta^{2}))$ and $F=\mathcal{O}(k^{2}M/\delta)$. Since $\mathcal{F}[f_{H}](\xi)=\mathcal{F}[f](H^{-1}\xi)$, combining Lemma \ref{see}, Corollary \ref{c1}, Lemma \ref{last} and Lemma \ref{44} proves Lemma \ref{22}.

\section{Proof of Lemma \ref{33}}
 The following lemma reduces the computational complexity of the convolution operation.% that is, within a certain error range, we only need $\mathcal{O}(\frac {\ln^{2}(TF)}{\epsilon^2})$ samples.  From the first equation in (\ref{est}), we know that if we don't use random sampling, we need $TF$ samples to calculate convolution directly, which is also independent of the dimension. %
\begin{lemma}\label{33}For any $0<\epsilon<1/2$, $1\le j\le k$,
suppose $$N=\mathcal{O}(\frac {k^{2}A^{2}\ln^{2}(TF)\ln^{2}(1/\delta)}{\epsilon^2}),$$  then with probability at least $1-\delta/4$ over the randomness $\mathcal{T}$, the following
inequality holds
\begin{equation}
 \sup_{x\in \Gamma_{1}}\bigg|\mathcal{F}^{-1}[\mathcal{X}_{j_{s}}\cdot \mathcal{F}[f_{H}]](x)-\frac{1}{N} \sum_{i=1}^{N}f_{H}(x-\big(0,\dots , 0,\frac { \lfloor  \sgn (t)((\frac{TF}{2} +1)^{2|t|}-1 ) \rfloor }{F}\big)^{*})v(t_{i})\bigg| \le \epsilon,
\end{equation}
where  $\mathcal{T}\triangleq \{t_{1},\dots,t_{N}\}$ are independent draws from the uniform distribution on $(-1/2,1/2)$,   $$v(t)=   2\ln (\frac{TF}{2} +1)(\frac{TF}{2}+1)^{2t} v_{2}(\lfloor  \sgn (t)((\frac{TF}{2}+1)^{2|t|}-1)  \rfloor), $$ and

 \begin{equation}
  v_{2}(y)\triangleq\left\{
\begin{array}{rcl}
\frac{\exp(-\pi i y)(\exp(\frac {2\pi i y(j-1)}{s})-
\exp(\frac{2\pi i  yj}{s} ))}{TF(1-\exp(2\pi i y/TF ))}       &      & {y\neq 0}\\
1/s    &      & {y=0}
\end{array} \right ..
\end{equation}

\end{lemma}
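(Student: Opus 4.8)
The plan is to pass from the $d$-dimensional discrete convolution implicit in $\mathcal{F}^{-1}[\mathcal{X}_{j_{s}}\cdot\mathcal{F}[f_{H}]]$ to a one-dimensional sum, then to an integral (so that the target becomes an expectation), and finally to estimate that expectation by its empirical average via a Hoeffding bound. For the first reduction I would invoke the convolution theorem for the discrete Fourier transform defined above: for every $x\in\Gamma_{1}$,
$$\mathcal{F}^{-1}[\mathcal{X}_{j_{s}}\cdot\mathcal{F}[f_{H}]](x)=\sum_{z\in\Gamma_{1}}f_{H}(x-z)\,P(z),\qquad P(z)=\frac{1}{(TF)^{d}}\sum_{\xi\in\Gamma_{2}}\mathcal{X}_{j_{s}}(\xi)\,e^{2\pi i z\cdot\xi}.$$
Since $\mathcal{X}_{j_{s}}$ constrains only the last coordinate $\xi_{d}$, the partial sums over $\xi_{1},\dots,\xi_{d-1}$ in $P(z)$ are complete sums of $TF$-th roots of unity and therefore vanish unless $z_{1}=\dots=z_{d-1}=0$; for $z=(0,\dots,0,z_{d}/F)^{*}$ the remaining sum over $\xi_{d}$ is a finite geometric series of length $TF/s$ beginning at $\xi_{d}=TF(j_{s}-1)/s-FT/2$, which (equal to $1/s$ when $z_{d}=0$) evaluates to exactly $v_{2}(z_{d})$. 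Thus the convolution collapses to the single sum $\mathcal{F}^{-1}[\mathcal{X}_{j_{s}}\cdot\mathcal{F}[f_{H}]](x)=\sum_{z_{d}\in\mathbb{Z}\cap[-TF/2,\,TF/2)}f_{H}(x-(0,\dots,0,z_{d}/F)^{*})\,v_{2}(z_{d})$.

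Next I would turn this finite sum into an expectation through the substitution $z_{d}=Y(t)\triangleq\lfloor\sgn(t)((\tfrac{TF}{2}+1)^{2|t|}-1)\rfloor$, $t\in(-\tfrac12,\tfrac12)$. One checks that $t\mapsto Y(t)$ is, up to a null set, a surjection onto $\mathbb{Z}\cap[-\tfrac{TF}{2},\tfrac{TF}{2})$, is constant on each preimage interval $I_{z_{d}}=Y^{-1}(z_{d})$, and --- by a short computation splitting $t>0$ from $t<0$ --- that the weight $v$ of the statement is designed exactly so that $\int_{I_{z_{d}}}v(t)\,dt=v_{2}(z_{d})$ for every $z_{d}$. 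Hence, for any bounded function $\phi$ on $\mathbb{Z}\cap[-\tfrac{TF}{2},\tfrac{TF}{2})$, $\mathbb{E}_{t\sim U(-1/2,1/2)}[\phi(Y(t))\,v(t)]=\sum_{z_{d}}\phi(z_{d})\,v_{2}(z_{d})$; applying this with $\phi(z_{d})=f_{H}(x-(0,\dots,0,z_{d}/F)^{*})$ and invoking the first reduction gives $\mathbb{E}_{t\sim U(-1/2,1/2)}[f_{H}(x-(0,\dots,0,Y(t)/F)^{*})\,v(t)]=\mathcal{F}^{-1}[\mathcal{X}_{j_{s}}\cdot\mathcal{F}[f_{H}]](x)$, i.e. the empirical average appearing in the lemma is an unbiased estimator of the quantity it is meant to approximate. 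The exponential (rather than uniform) parametrization of $z_{d}$ is precisely what keeps the summand of size $\mathcal{O}(\ln(TF))$ instead of the $\mathcal{O}(TF)$ a uniform choice would incur, which is what makes the eventual sample bound polylogarithmic in $TF$.

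For the concentration step, set $G_{i}(x)\triangleq f_{H}(x-(0,\dots,0,Y(t_{i})/F)^{*})\,v(t_{i})$. Then $|f_{H}|\le\sum_{m=1}^{k}|a_{m}|\le kA$, and combining $|v_{2}(y)|\le1/(2|y|)$ for $y\ne0$ and $|v_{2}(0)|=1/s$ with the elementary bound $(\tfrac{TF}{2}+1)^{2|t|}\le|Y(t)|+2$ on the prefactor of $v$ yields $\sup_{t}|v(t)|=\mathcal{O}(\ln(TF))$, so $|G_{i}(x)|=\mathcal{O}(kA\ln(TF))$. For each fixed $x$, Hoeffding's inequality applied to the real and imaginary parts of $\tfrac1N\sum_{i=1}^{N}G_{i}(x)$ bounds the probability that it deviates from its mean $\mathcal{F}^{-1}[\mathcal{X}_{j_{s}}\cdot\mathcal{F}[f_{H}]](x)$ by more than $\epsilon$ by $4\exp(-c_{0}N\epsilon^{2}/(kA\ln(TF))^{2})$ for an absolute constant $c_{0}>0$. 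A union bound over $x\in\Gamma_{1}$ --- equivalently, over the polynomially many points at which the algorithm actually queries $g_{j,l}$ --- together with the choice $N=\mathcal{O}(k^{2}A^{2}\ln^{2}(TF)\ln^{2}(1/\delta)/\epsilon^{2})$ makes the total failure probability at most $\delta/4$, which is the assertion.

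The step I expect to be the main obstacle is the second one: one has to verify \emph{exactly}, not merely asymptotically, that the weight $v$ cancels the sampling density induced by $Y$ on both the positive and the negative branch, treating the endpoint index $z_{d}=\pm\tfrac{TF}{2}$ (a null set) and the special value $z_{d}=0$ (where $v_{2}=1/s$) separately, while simultaneously checking that $\sup_{t}|v(t)|=\mathcal{O}(\ln(TF))$ so that the third step yields the stated $N$. A secondary point requiring care is the size of the net in the union bound of the third step --- a crude bound over all of $\Gamma_{1}$ contributes an extra $d\ln(TF)$ factor, so to land on exactly the claimed form of $N$ one restricts to the polynomially many evaluation points actually used, or exploits that the error $\tfrac1N\sum_{i}G_{i}(\cdot)-\mathcal{F}^{-1}[\mathcal{X}_{j_{s}}\cdot\mathcal{F}[f_{H}]](\cdot)$ is, as a function of $x$, a $k$-term trigonometric polynomial.
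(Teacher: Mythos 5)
Your first three steps --- collapsing the $d$-dimensional convolution to a one-dimensional sum over $z_d$, the exponential reparametrization $z_d = Y(t)$ that turns that sum into $\mathbb{E}_{t\sim U(-1/2,1/2)}\bigl[f_H\bigl(x-(0,\dots,0,Y(t)/F)^{*}\bigr)v(t)\bigr]$, and the estimate $\sup_t|v(t)| = \mathcal{O}(\ln(TF))$ --- coincide with the paper's derivation (equation (4.2) in the source, the chain from the discrete convolution to the integral over $t\in(-1/2,1/2)$). The gap is precisely where you flag it, in the concentration step, and neither of the two fixes you float closes it. Hoeffding followed by a union bound over $\Gamma_1$ inflates $N$ by $\ln|\Gamma_1| = d\ln(TF)$, which would introduce a $d$-dependence absent from the stated $N$; and narrowing the net to ``the polynomially many points actually queried'' proves a weaker statement than the lemma, which asserts a supremum over all of $\Gamma_1$. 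Your second suggestion --- that the deviation is, as a function of $x$, a low-complexity trigonometric object --- is the right instinct, but it is exactly the technical content of the paper's proof and needs to be carried out via a uniform-convergence tool, not a union bound. The paper does it through empirical Rademacher complexity: for a fixed frequency $w$ and envelope $|u|\le C_0$ it considers the $x$-parametrized class $Q_{w,u,C_0}=\{t\mapsto u(t)\sin(2\pi w(x-Y(t))/F)\}$, uses the addition formula $\sin(2\pi w(x-Y(t))/F)=z_1\cos(2\pi wY(t)/F)-z_2\sin(2\pi wY(t)/F)$ with $(z_1,z_2)=(\sin(2\pi wx/F),\cos(2\pi wx/F))\in[-1,1]^2$ to reduce to a two-dimensional linear class, and invokes Lemma 26.10 of \cite{shalev2014understanding} to bound $\widehat{\Re}(\mathcal{T};Q_{w,u,C_0})\le 2C_0/\sqrt{N}$ with no dependence on $T$, $F$, or $d$. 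Feeding this into the Rademacher concentration bound (Lemma A.10 of \cite{allen2019learning}) gives the uniform-in-$x$ deviation bound, and summing over the $k$ exponentials of $f_H$ (each contributing a sine class and a cosine class with $C_0=\mathcal{O}(A\ln(TF))$) yields the claimed $N=\mathcal{O}(k^2A^2\ln^2(TF)\ln^2(1/\delta)/\epsilon^2)$. To complete your proof you would need to replace the Hoeffding-plus-union-bound paragraph with this (or an equivalent chaining/VC) argument.
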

 \begin{proof} Observe that
$$  \sum_{w\in \mathbb{Z}\cap [-TF/2,TF/2)}\exp(\frac {2\pi i nw}{TF})=0 \quad \text{for} \quad n\in \mathbb{N}^{+}\cup \mathbb{N}^{-},$$
then
\begin{equation}\label{est}
\begin{array}{ll}
&\mathcal{F}^{-1}[\mathcal{X}_{j_{s}}\cdot \mathcal{F}[f_{H}]](x)=\sum_{y\in \Gamma_{1}}\frac {f_{H}(x-y)\mathcal{F}^{-1}(\mathcal{X}_{j})(y)}{(\sqrt{T}F)^d}\\&=\sum_{z\in \mathbb{Z}\cap [\frac{-TF}{2},\frac{TF}{2})}f_{H}(x-(0,\dots , 0,\frac {z}{F})^{*})  v_{2}(z)\\
&= \int_{0}^{TF/2}f_{H}(x-(0,\dots ,0,\frac { \lfloor z\rfloor }{F})^{*}) (z+1)2\ln (1+TF/2 )v_{2}(z)d\big(\frac{\ln (z+1)}{2\ln (1+TF/2)}\big)\\
&+ \int_{-TF/2}^{0}f_{H}(x-(0,\dots ,0,\frac { \lfloor z\rfloor }{F})^{*}) (-z+1)2\ln (1+TF/2 )v_{2}(z)d\big(\frac{\ln (-z+1)}{2\ln (1+TF/2)}\big)\\
&= \int_{-1/2}^{1/2} f_{H}(x-(0,\dots , 0,\frac { \lfloor  \sgn (t)((\frac{TF}{2} +1)^{2|t|}-1 ) \rfloor }{F})^{*})v(t)dt,
\end{array}
\end{equation}
%&= \int_{0}^{1} \sum_{j=1}^{k} a_{j}\exp(2\pi i  (\frac {w_{j,d}x_{d} -\lfloor  (TF+1)^{t}-1\rfloor}{F} )v(t)dt,%
 %$w_{j,d}$ represent the d-th component of $w_{j},$%
where $\sup_{t\in [0,1)}|v(t)|\le  \mathcal{O}( \ln(TF))$.

Next, we consider the empirical Rademacher complexity (\cite{shalev2014understanding,allen2019learning}) of the following function spaces
$$Q_{w,u,C_{0}}\triangleq \{q_{x}(t)\triangleq u(t)\sin(2\pi w (\frac {x -\lfloor  \sgn (t)((\frac{TF}{2} +1)^{2|t|}-1 ) \rfloor}{F} )  \}, $$
and
  $$Q_{w,u,C_{0}}^{'}\triangleq \{q_{x}^{'}(t)\triangleq u(t)\cos(2\pi w (\frac {x -\lfloor  \sgn (t)((\frac{TF}{2} +1)^{2|t|}-1 ) \rfloor}{F} )  \},$$
where $w,x,t\in \mathbb{R}$ and $u(t)$ is any real-valued function satisfying $|u| \le C_{0}$.
Let $\widehat{\Re}(\mathcal{T} ; Q_{w,u,C_{0}})$ denote the empirical Rademacher complexity for $Q_{w,u,C_{0}}$ and $\mathcal{T}$, then
\begin{equation}\label{88}
\begin{array}{ll}
&\widehat{\Re}(\mathcal{T} ; Q_{w,u,C_{0}}) = \mathbb{E}_{\xi \sim\{\pm 1\}^{N}}\left[\sup _{x \in \mathbb{R}}  \sum_{i=1}^{N} \frac{\xi_{i} q_{x}\left(t_{i}\right)}{N}\right]\\&\le\mathbb{E}_{\xi \sim\{\pm 1\}^{N}}\bigg [\sup _{(z_{1},z_{2} )\in [-1,1]^2}\sum_{i=1}^{N} \frac{\xi_{i}(z_{1}y_{i,1}-z_{2}y_{i,2})}{N}\bigg]\le \frac{2C_{0}}{\sqrt{N}},
\end{array}
\end{equation}
 where %$z_{1}=\sin(\frac {2\pi wx}{F}), z_{2}=\cos(\frac {2\pi wx}{F})$,
 $$y_{i,1}=\cos(\frac{2\pi w\lfloor  \sgn (t_{i})((\frac{TF}{2} +1)^{2|t_{i}|}-1 ) \rfloor}{F})u(t_{i}),\quad y_{i,2}=\sin(\frac{2\pi w\lfloor  \sgn (t_{i})((\frac{TF}{2} +1)^{2|t_{i}|}-1 ) \rfloor}{F})u(t_{i}),$$ the last inequality in (\ref{88}) is obtained by Lemma 26.10 in \cite{shalev2014understanding}. Similarly, we have $\widehat{\Re}(\mathcal{T} ; Q_{w,u,C_{0}}^{'})\le \frac{2C_{0}}{\sqrt{N}}$.

Combining the last equality in (\ref{est}) and Lemma A.10 in \cite{allen2019learning} proves Lemma \ref{33}.

\end{proof}

\section{Conclusion}

%For the sake of brevity, We did not reconstruct the signal in the sense of MSE and we did not consider other types of noise, which is worthy of further consideration in theory. In addition, we can also discuss the reconstruction of signals without frequency gap just like one-dimensional case \cite{chen2016fourier}. In terms of application, optimizing the algorithm in this paper (reducing computational complexity) to make it more applicable to practice is worth further research. We did not reconstruct the signal in the sense of MSE and we did not consider other types of noise, which is worth further consideration at the theoretical level

%In our opinion, this is not inherently difficult. In fact, after recovering the frequency of each component with a certain precision, we only need to shift the phase of the corresponding component (hash transform can isolate each component) so that its frequency is between 1 and 2. Such signals can even be approximated by polynomials.

%By establishing a new hashing scheme, we provided a polynomial-time SFT algorithm to recover the high-dimensional continuous frequencies. This algorithm is robust to noise.  For the frequency-sparse high-dimensional signal, there are many paths to continue our work, such as reconstructing the signal in the MSE sense (especially for the signals without frequency gap \cite{chen2016fourier}), adding other types of noise to the signal, optimizing our algorithm and the complexity estimation to make it more applicable to practice, and so on.

We do not consider the case of noise in this paper,  so it is an issue that need to be discussed later. In addition, we can also discuss the reconstruction of the signals without frequency gap just like one-dimensional case \cite{chen2016fourier}. In terms of application, it is worthy to optimize the algorithm and the complexity estimation to make it more applicable to practice.

\end{document}